\newtheorem{Thm}{Theorem}
\theoremstyle{definition}
\newcommand{\bra}[1]{{\left\langle #1 \right|}}
\newcommand{\ket}[1]{{\left| #1 \right\rangle}}
\newcommand{\T}{\mbox{$\mathrm{tr}$}}
\begin{document}
%%%%%%%%%%%%%%%%%%%%%%%%%%%%%%%%%%%%%%%%%%%%%%%%%%%%%%%%%%%%%%%%%%%%%%%%%%
%                                                                        %
%                                 Title                                  %
%                                                                        %
%%%%%%%%%%%%%%%%%%%%%%%%%%%%%%%%%%%%%%%%%%%%%%%%%%%%%%%%%%%%%%%%%%%%%%%%%%
\title{Tsallis entropy, $q$-expectation and constraints on three-party quantum correlations}
\author{Jeong San Kim}
\email{freddie1@khu.ac.kr} \affiliation{
 Department of Applied Mathematics and Institute of Natural Sciences, Kyung Hee University, Yongin-si, Gyeonggi-do 446-701, Korea
}
\date{\today}

%%%%%%%%%%%%%%%%%%%%%%%%%%%%%%%%%%%%%%%%%%%%%%%%%%%%%%%%%%%%%%%%%%%%%%%%%%
%                                                                        %
%                              Abstract                                  %
%                                                                        %
%%%%%%%%%%%%%%%%%%%%%%%%%%%%%%%%%%%%%%%%%%%%%%%%%%%%%%%%%%%%%%%%%%%%%%%%%%
\begin{abstract}
We show that the mutually exclusive nature of classical and quantum correlations distributed in multi-party quantum systems can be characterized in terms of $q$-expectation. Using Tsallis-$q$ entropy and $q$-expectation, we first provide generalized definitions of classical and quantum correlations, and establish their trade-off relations in three-party quantum systems of arbitrary dimension with respect to $q$-expectation for $q\geq 1$. We also provide equivalence conditions for monogamy and polygamy inequalities of quantum entanglement and quantum discord distributed in three-party quantum systems of arbitrary dimension with respect to $q$-expectation for $q\geq 1$.
\end{abstract}

\pacs{
03.67.Mn,  % Entanglement production, characterization and manipulation
03.65.Ud % Entanglement and quantum non-locality
}
%\keywords{}
\maketitle

%%%%%%%%%%%%%%%%%%%%%%%%%%%%%%%%%%%%%%%%%%%%%%%%%%%%%%%%%%%%%%%%%%%%%%
%%%                                                                %%%
%%%                         Introduction                           %%%
%%%                                                                %%%
%%%%%%%%%%%%%%%%%%%%%%%%%%%%%%%%%%%%%%%%%%%%%%%%%%%%%%%%%%%%%%%%%%%%%%
\section{Introduction}

One distinct property of quantum correlations from classical ones is
in their shareability among many parties. Quantum correlations have limited shareability and distribution in multi-party systems, whereas classical correlations
can be freely shared among parties. For example, quantum entanglement is known to
obey the {\em monogamy} property~\cite{T04, KGS}; if a pair of quantum systems $A$ and $B$ are in a maximally entangled state, they cannot be in an entangled state with any other system, namely, $C$. This monogamy of entanglement has been quantitatively characterized as monogamy inequalities in terms of various entanglement measures~\cite{ckw, ov, kds, KSRenyi, KimT, KSU}. We note that the entanglement of assistance shows {\em polygamy}(dual monogamy) property in multi-party quantum systems, which was also quantitatively characterized as polygamy inequalities~\cite{GBS, BGK, KimGP}

In fact, maximal entanglement between two systems $A$ and $B$ even prohibits them from sharing classical correlation with $C$, the third party. Moreover, a maximal classical correlation between two parties forbids sharing entanglement with other parties~\cite{KW}. Thus a perfect correlation between two systems $A$ and $B$ can even exclude the possibility of sharing different kind of correlation with other systems.  We also note that there are trade-off relations of other correlations such as Bell nonlocality~\cite{SG, CH} and quantum discord~\cite{SAPB, BZYW}.

{\em Tsallis entropy} is a one-parameter generalization of von Neumann entropy based on the concept of $q$-{\em expectation}, and it plays an important role in various places~\cite{tsallis, lv}. In quantum information theory, the concavity of Tsallis-$q$ entropy for $q>0$ guarantees the property of {\em entanglement monotone}, a key requirement to construct a faithful entanglement measure~\cite{vidal}. Some conditions on separability criteria of quantum states can also be found by using Tsallis entropy~\cite{ar,tlb,rc}.

In nonextensive statistical mechanics, the concept of $q$-expectation in the definition of Tsallis-$q$ entropy is known to be theoretically consistent with the relative-entropy principle(or minimum cross-entropy principle),
which excludes the possibility of using the ordinary
expectation value from nonextensive statistical mechanics~\cite{LP, Abe}.
There are also discussions about characterizing classical statistical correlations inherented in quantum states in terms of $q$-expectation and Tsallis-$q$ entropy~\cite{rr, bpcp}.

In this paper, inspired by the concept of $q$-expectation, we show that
the mutually exclusive nature among classical and quantum correlations in multi-party quantum systems can be characterized in terms of $q$-expectation for the full range of $q\geq 1$. Using Tsallis-$q$ entropy and $q$-expectation, we first provide generalized definitions of classical and quantum correlations such as one-way classical correlation, quantum entanglement and quantum discord as well as their dual quantities. By investigating their properties, we establish some classes of trade-off relations among classical and quantum correlations in three-party quantum systems of arbitrary dimension in terms of the generalized correlation measures. We also provide equivalence conditions for monogamy and polygamy inequalities of quantum entanglement and quantum discord distributed in three-party quantum systems of arbitrary dimension with respect to $q$-expectation. When $q$ tends to 1, Tsallis-$q$ entropy and $q$-expectation are reduced to von Neumann entropy and ordinary expectation, respectively. Thus, our results encapsulate previous results of trade-off relations as special cases.

This paper is organized as follows. In Sec.~\ref{subsec: Tentropy},
we first recall the concept of $q$-expectation in accordance with the definition of Tsallis-$q$ entropy, and provide generalized definitions of entanglement quantifications in terms of Tsallis-$q$ entropy and $q$-expectation.
In Sec.~\ref{subsec: gqcorrelations}, we generalize the definitions of various classical and quantum correlations in terms of Tsallis-$q$ entropy and $q$-expectation. In Sec.~\ref{sec: toff}, we establish some classes of trade-off relations among classical and quantum correlations distributed in three-party quantum systems in terms of $q$-expectation for $q\geq1$.
In Sec.~\ref{sec: monopoly}, we provide equivalence conditions of monogamy and polygamy inequalities for quantum entanglement and quantum discord in three-party quantum systems in terms of $q$-expectation for $q\geq 1$.
Finally, we summarize our results in Sec.~\ref{Conclusion}.

\section{$q$-expectation and quantum correlations}
\label{sec: q-exp}

\subsection{Tsallis entropy and $q$-expected entanglement}
\label{subsec: Tentropy}

For a quantum state $\rho$, its Tsallis-$q$ entropy is defined as
\begin{align}
S_{q}\left(\rho\right)=-\T \rho ^{q} \ln_{q} \rho,
%= \frac {1-\T\left(\rho ^q\right)}{q-1},
\label{Qtsallis}
\end{align}
where
\begin{eqnarray}
\ln _{q} x &=&  \frac {x^{1-q}-1} {1-q},
\label{qlog}
\end{eqnarray}
is the {\em generalized logarithm} of the real parameter $q$ with $q\geq0$ and
$~q \ne 1$~\cite{tsallis,lv}. As the generalized logarithm in Eq.~(\ref{qlog}) converges to the natural logarithm when $q$ tends to 1,
\begin{align}
\lim_{q\rightarrow 1}\ln _{q} x =\ln x,
\label{limitlog}
\end{align}
the singularity at $q=1$ that arises in the definition of Tsallis-$q$ entropy can be replaced by von Neumann entropy,
\begin{equation}
\lim_{q\rightarrow 1}S_{q}\left(\rho\right)=-\T\rho \ln \rho=:S\left(\rho\right).
\end{equation}
Thus we can simply denote $S_{1}\left(\rho\right)=S\left(\rho\right)$.

For a quantum state $\rho$ with its spectral decomposition
\begin{align}
\rho=\sum_i \lambda_i \ket{e_i}\bra{e_i},
\label{specdec}
\end{align}
its Tsallis-$q$ entropy in Eq.~(\ref{Qtsallis}) can be rewritten as
\begin{align}
S_{q}\left(\rho\right)=-\sum_{i}\lambda_{i}^q \ln _{q}\lambda_i = \frac{1}{1-q}\left[\sum_{i}\lambda_{i}^q -1\right].
\label{Ctsallis}
\end{align}
In other words, Tsallis-$q$ entropy is defined as the $q$-{\em expectation} of the generalized logarithms of the spectrum. Thus the generalization of von Neumann entropy into Tsallis-$q$ entropy is based on $q$-expectation of nonnegative real parameter $q$.

For $q\geq0$, Tsallis-$q$ entropy is a concave function. We also note that Tsallis-$q$ entropy is pseudoadditive, that is,
\begin{align}
S_{q}\left(\rho\otimes \sigma\right)=S_{q}\left(\rho\right)+S_{q}\left(\sigma\right)+\left(1-q\right)S_{q}\left(\rho\right)S_{q}\left(\sigma\right)
\label{nonext}
\end{align}
for any quantum states $\rho$ and $\sigma$.

Inspired by the concept of $q$-expectation in accordance with the definition of Tsallis-$q$ entropy, here we provide a class of bipartite entanglement measures; for $q\geq0$ and a bipartite pure state $\ket{\psi}_{AB}$, we define its $q$-{\em expected entanglement}($q$-E) as
\begin{equation}
{E}_{q}\left(\ket{\psi}_{AB} \right)=S_{q}(\rho_A),
\label{qEpure}
\end{equation}
where $\rho_A=\T _{B} \ket{\psi}_{AB}\bra{\psi}$ is the reduced density matrix of $\rho_{AB}$ on subsystem $A$.
For a bipartite mixed state $\rho_{AB}$, we define its $q$-E as the minimum $q$-expectation
\begin{equation}
E_{q}\left(\rho_{AB} \right)=\min \sum_i p^q_i E_{q}(\ket{\psi_i}_{AB}),
\label{qEmixed}
\end{equation}
over all possible pure state
decompositions of $\rho_{AB}$,
\begin{equation}
\rho_{AB}=\sum_{i} p_i |\psi_i\rangle_{AB}\langle\psi_i|.
\label{decomp}
\end{equation}

When $q$ tends to 1, Tsallis-$q$ entropy converges to von Neumann entropy and the $q$-expectation becomes ordinary expectation, therefore
\begin{align}
\lim_{q\rightarrow1}E_{q}\left(\rho_{AB} \right)=E_{\rm f}\left(\rho_{AB} \right),
\end{align}
where
\begin{align}
E_{\rm f}\left(\rho_{AB}\right)&=\min\sum_{i}p_i E(\ket{\psi_i}_{AB})
%\nonumber\\
%&=\min\sum_{i}p_iS(\rho_B^i)\nonumber\\
%&=E_{\rm f}\left(\rho_{B|A}\right)=E_{\rm f}\left(\rho_{AB}\right).
\label{eof}
\end{align}
is the {\em entanglement of formation}(EoF) of $\rho_{AB}$~\cite{bdsw}.
Thus $q$-E is one-parameter generalization of EoF for the full range of nonnegative parameter $q$ based on $q$-expectation.

As a dual quantity to $q$-E, we define $q$-{\em expected entanglement of assistance}($q$-EoA),
\begin{equation}
E^a_{q}\left(\rho_{AB} \right)=\max \sum_i p_i^q E_{q}(\ket{\psi_i}_{AB}),
\label{qEoA}
\end{equation}
where the maximum is taken over all possible pure state
decompositions of $\rho_{AB}$.
Similarly, we have
\begin{align}
\lim_{q\rightarrow1}E^a_{q}\left(\rho_{AB}
\right)=E^a\left(\rho_{AB} \right),
\label{TsallistoEoA}
\end{align}
where $E^a(\rho_{AB})$ is the entanglement of assistance(EoA)
of $\rho_{AB}$ defined as~\cite{cohen}
\begin{equation}
E^a(\rho_{AB})=\max \sum_{i}p_i E(\ket{\psi_i}_{AB}).
\label{eoa}
\end{equation}

\subsection{Generalized quantum correlations in terms of $q$-expectations}
\label{subsec: gqcorrelations}
For an ensemble representation $\mathcal E = \{p_i, \rho_i\}$ of a quantum state $\rho$~\cite{pdecomp}, its {\em Tsallis-$q$ difference} is defined as~\cite{Kim16T}
\begin{align}
\chi_q\left(\mathcal E\right)=S_q\left(\rho\right)-\sum_{i}p_{i}^q S_q\left(\rho_i\right).
\label{eq: q-diff}
\end{align}
Due to the the concavity of Tsallis-$q$ entropy, Tsallis-$q$ difference is always nonnegative for $q\geq 1$, and it converges to the Holevo quantity,
\begin{align}
\chi\left(\mathcal E\right)=S\left(\rho\right)-\sum_{i}p_i S\left(\rho_i\right),
\label{eq: holevo}
\end{align}
when $q$ tends to $1$.

Now let us consider a bipartite quantum state $\rho_{AB}$ with its reduced density matrix $\rho_A=\T_A\rho_{AB}$. We note that each measurement $\{M^x_B\}$ applied on subsystem $B$ induces a probability ensemble $\mathcal E = \{p_x, \rho_A^x\}$
of $\rho_A$ in the way that $p_x\equiv \T[(I_A\otimes M_B^x)\rho_{AB}]$ is the probability of the outcome $x$ and
$\rho^x_A=\T_B[(I_A\otimes {M_B^x})\rho_{AB}]/p_x$ is the state
of system $A$ when the outcome was $x$. The {\em one-way classical correlation}(CC)
of a bipartite state $\rho_{AB}$ is then defined as the maximum
Holevo quantity of $\rho_A =\T_B(\rho_{AB})$ over all
probability ensembles $\mathcal E = \{p_x, \rho_A^x\}$ of $\rho_A$ induced by the measurement on $B$~\cite{KW},
\begin{align}
{\mathcal J}^{\leftarrow}(\rho_{AB})&= \max_{\mathcal E}
\chi\left(\mathcal E\right).
\label{CC}
\end{align}

As a dual quantity to CC, the {\em one-way unlocalizable entanglement}(UE)~\cite{BGK} of $\rho_{AB}$ is defined as
the minimum Holevo quantity of $\rho_A =\T_B(\rho_{AB})$
\begin{align}
{\mathbf u}E^{\leftarrow}(\rho_{AB}) &=  \min_{\mathcal E} \chi\left(\mathcal E\right)
\label{EU}
\end{align}
over all possible probability ensembles of $\rho_A$ induced by {\em rank-1} measurements on subsystem $B$.
(To avoid the trivial minimum that is always zero when each measurement operatorof subsystem $B$ is proportional to the identity operator, the definition of UE for $\rho_{AB}$ only considers possible rank-1 measurements of $B$.)

Using Tsallis-$q$ difference and $q$-expectation, we generalize CC in Eq.~(\ref{CC}) for any real parameter $q\geq0$; {\em one-way classical $q$-correlation}
($q$-CC) of a bipartite state $\rho_{AB}$ is defined as
\begin{align}
{\mathcal J}_q^{\leftarrow}(\rho_{AB})&= \max_{\mathcal E} \chi_q\left(\mathcal E\right)
\label{qCC}
\end{align}
where the maximum is taken over all ensemble representations $\mathcal E$ of $\rho_A$ induced by measurements on subsystem $B$. Similarly, UE in Eq.~(\ref{EU}) can also be generalized as the minimum Tsallis-$q$ difference
\begin{align}
{\mathbf u}E_q^{\leftarrow}(\rho_{AB}) &= \min_{\mathcal E} \chi_q\left(\mathcal E\right),
\label{qUE}
\end{align}
over all probability ensemble representations $\mathcal E $ of $\rho_A$ induced by rank-1 measurements on subsystem $B$.

The quantity in Eq.~(\ref{qUE}) is referred to as the {\em one-way unlocalizable $q$-entanglement}($q$-UE)~\cite{Kim16T}, which can be considered as a dual quantity to $q$-CC in Eq.~(\ref{qCC}). Moreover, the continuity of Tsallis-$q$ difference with respect to $q$ naturally leads us to
\begin{align}
\lim_{q\rightarrow1}{\mathcal J}_q^{\leftarrow}(\rho_{AB})={\mathcal J}^{\leftarrow}(\rho_{AB})
\label{contJq}
\end{align}
and
\begin{align}
\lim_{q\rightarrow1}{\mathbf u}E_q^{\leftarrow}(\rho_{AB})={\mathbf u}E^{\leftarrow}(\rho_{AB})
\label{contJq}
\end{align}
for any bipartite quantum state $\rho_{AB}$.

Besides generalized entanglements such as $q$-E, $q$-EoA  and $q$-UE, we note that
the concept of $q$-expectation also enables us to generalize a different kind of quantum correlation, namely, {\em quantum discord}~\cite{discord}.
For a bipartite quantum state $\rho_{AB}$, its quantum discord is defined as
\begin{align}
\delta^{\leftarrow}(\rho_{AB})={\mathcal I}\left(\rho_{AB}\right)-{\mathcal J}^{\leftarrow}(\rho_{AB}),
\label{dis}
\end{align}
which is the difference between quantum mutual information
\begin{align}
{\mathcal I}\left(\rho_{AB}\right)=S(\rho_A)+S(\rho_B)-S(\rho_{AB})
\label{qmut}
\end{align}
and CC of $\rho_{AB}$ in Eq.~(\ref{CC}).
Moreover, the duality between CC and UE provides us with a dual quantity to quantum discord; {\em one-way unlocalizable quantum discord}(UD) of $\rho_{AB}$ is defined as~\cite{XFL12}
\begin{align}
{\mathbf u}\delta^{\leftarrow}(\rho_{AB})={\mathcal I}\left(\rho_{AB}\right)-{\mathbf u}E^{\leftarrow}(\rho_{AB}).
\label{udis}
\end{align}

Quantum mutual information in Eq.~(\ref{qmut}) can be generalized in terms of Tsallis-$q$ entropy. For $q\geq0$ and a bipartite quantum state $\rho_{AB}$, its {\em Tsallis-$q$ mutual entropy} is defined as
\begin{align}
{\mathcal I}_q\left(\rho_{AB}\right)=S_q\left(\rho_A\right)+S_q\left(\rho_B\right)-
S_q\left(\rho_{AB}\right).
\label{eq: qmutul}
\end{align}
By using Tsallis-$q$ mutual entropy and $q$-CC in Eq.~(\ref{qCC}), quantum discord can be generalized as
\begin{align}
\delta_q^{\leftarrow}(\rho_{AB})={\mathcal I}_q\left(\rho_{AB}\right)-{\mathcal J}_q^{\leftarrow}(\rho_{AB}),
\label{qdis}
\end{align}
which is referred to as {\em quantum $q$-discord}($q$-D)\cite{qdiscord}.

To close this section, we provide a dual quantity to $q$-D. The {\em one-way unlocalizable quantum $q$-discord}($q$-UD) of a bipartite state $\rho_{AB}$ is defined as
\begin{align}
{\mathbf u}\delta_q^{\leftarrow}(\rho_{AB})={\mathcal I}_q\left(\rho_{AB}\right)-{\mathbf u}E_q^{\leftarrow}(\rho_{AB}),
\label{qudis}
\end{align}
where ${\mathbf u}E_q^{\leftarrow}(\rho_{AB})$ is the $q$-UE of $\rho_{AB}$ in Eq.~(\ref{qUE}).

\section{Trade-off relations}
\label{sec: toff}
In this section, we establish some classes of trade-off relations among classical and quantum correlations in terms of $q$-expectation. The following theorem says that $q$-CC and $q$-E
as well as $q$-UE and $q$-EoA are mutually exclusive in three-party quantum systems.
\begin{Thm}
For $q \geq 1$ and a three-party pure state $\ket{\psi}_{ABC}$ with its reduced density matrices $\rho_{AB}=\T_C\ket{\psi}_{ABC}\bra{\psi}$,
$\rho_{AC}=\T_B\ket{\psi}_{ABC}\bra{\psi}$ and $\rho_{A}=\T_{BC}\ket{\psi}_{ABC}\bra{\psi}$, we have
\begin{align}
S_q(\rho_A)={\mathcal J}_q^{\leftarrow}(\rho_{AB})+E_q\left(\rho_{AC}\right)
\label{qCCEq}
\end{align}
and
\begin{align}
S_q(\rho_A)={\mathbf u}E_q^{\leftarrow}(\rho_{AB})+E^a_q\left(\rho_{AC}\right).
\label{qUEEqa}
\end{align}
\label{thm: qUEEqa}
\end{Thm}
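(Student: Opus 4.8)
The plan is to exploit the fact that $\ket{\psi}_{ABC}$ simultaneously purifies $\rho_{AB}$ and $\rho_{AC}$, so that measurements on $B$ and pure-state decompositions of $\rho_{AC}$ are two faces of the same object. Concretely, I would set up a correspondence between rank-1 measurements on $B$ and pure-state decompositions of $\rho_{AC}$, show that under this correspondence the Tsallis-$q$ difference $\chi_q$ of the induced ensemble of $\rho_A$ equals $S_q(\rho_A)$ minus a $q$-weighted sum of pure-state $q$-entanglements of $\rho_{AC}$, and then read off both identities by optimizing in opposite directions.

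For the correspondence I would take a rank-1 measurement $\{M_B^x=\ket{m_x}\bra{m_x}\}$ on $B$ and observe that the subnormalized post-measurement vector on $AC$, namely $\ket{\xi_x}_{AC}=\bra{m_x}_B\ket{\psi}_{ABC}$ (the partial projection of $\ket{\psi}_{ABC}$ onto $\ket{m_x}$), is pure. Writing $p_x=\inn{\xi_x}{\xi_x}$ and $\ket{\psi_x}_{AC}=\ket{\xi_x}_{AC}/\sqrt{p_x}$ yields the decomposition $\rho_{AC}=\sum_x p_x\ket{\psi_x}_{AC}\bra{\psi_x}$, while the induced ensemble state of $\rho_A$ is exactly $\rho_A^x=\T_C\ket{\psi_x}_{AC}\bra{\psi_x}$. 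By the pure-state definition of $q$-E this gives
\begin{equation}
S_q(\rho_A^x)=E_q(\ket{\psi_x}_{AC}),
\end{equation}
so the Tsallis-$q$ difference of the induced ensemble is
\begin{equation}
\chi_q(\mathcal E)=S_q(\rho_A)-\sum_x p_x^q E_q(\ket{\psi_x}_{AC}).
\end{equation}
I would then invoke Hughston--Jozsa--Wootters realizability in the reverse direction: every pure-state decomposition of $\rho_{AC}$ arises from some rank-1 measurement on $B$, where one allows POVMs with more outcomes than $\dim B$ so that decompositions with arbitrarily many terms are covered.

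With the correspondence in hand, the two identities follow by optimizing the displayed expression for $\chi_q$ in opposite directions. Maximizing over measurements turns the subtracted term into a minimum over decompositions, which is precisely $E_q(\rho_{AC})$, giving $S_q(\rho_A)={\mathcal J}_q^{\leftarrow}(\rho_{AB})+E_q(\rho_{AC})$; minimizing over rank-1 measurements turns it into a maximum, that is $E^a_q(\rho_{AC})$, giving $S_q(\rho_A)={\mathbf u}E_q^{\leftarrow}(\rho_{AB})+E^a_q(\rho_{AC})$.

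The main obstacle is the first identity, where the definition of $q$-CC ranges over \emph{all} measurements on $B$, not just rank-1 ones, so I must show the maximum of $\chi_q$ is attained on rank-1 measurements. This is exactly where $q\geq 1$ is indispensable: refining a coarse element $M^x=\sum_k\ket{m_{x,k}}\bra{m_{x,k}}$ splits outcome $x$ into outcomes $(x,k)$ with $p_x\rho_A^x=\sum_k p_{x,k}\rho_A^{x,k}$ and $\sum_k p_{x,k}=p_x$, and I would combine concavity of $S_q$, the bound $p_{x,k}^q\leq p_x^{q-1}p_{x,k}$ valid for $q\geq 1$ since $p_{x,k}\leq p_x$, and nonnegativity of $S_q$ to conclude $\sum_{x,k}p_{x,k}^q S_q(\rho_A^{x,k})\leq\sum_x p_x^q S_q(\rho_A^x)$, so refinement can only increase $\chi_q$. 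The remaining steps are routine bookkeeping, but this monotonicity under refinement is what makes the clean max/min duality go through.
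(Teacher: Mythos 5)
Your proposal is correct and follows essentially the same route as the paper: the one-to-one (HJW) correspondence between rank-1 measurements on $B$ and pure-state decompositions of $\rho_{AC}$, the identity $\chi_q(\mathcal E)=S_q(\rho_A)-\sum_x p_x^q E_q(\ket{\psi_x}_{AC})$, opposite-direction optimization for the two equalities, and the refinement-monotonicity step for $q\geq 1$ (your bound $p_{x,k}^q\leq p_x^{q-1}p_{x,k}$ combined with concavity of $S_q$ is exactly the paper's two-inequality estimate in its converse direction). No gaps.
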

Before we prove the theorem, we first note that each rank-1 measurement $\{M_B^x\}$ applied on system $B$ of $\ket{\psi}_{ABC}$ induces a pure-state decomposition
of $\rho_{AC}$,
\begin{align}
\rho_{AC}=\sum_{x}p_x \ket{\phi^x}_{AC}\bra{\phi^x}
\label{ensemble}
\end{align}
in the way that
\begin{align}
p_x=&\T[(I_{A}\otimes M_B^x\otimes I_{C})\ket{\psi}_{ABC}\bra{\psi}]
\label{pxcor}
\end{align}
and
\begin{align}
\ket{\phi^x}_{AC}\bra{\phi^x}&= \T_B[(I_{A}\otimes M_B^x\otimes I_{C})\ket{\psi}_{ABC}\bra{\psi}]/p_x.
\label{purecor}
\end{align}
Moreover, it is also straightforward to verify that each pure-state decomposition of
$\rho_{AC}=\sum_{x}p_x \ket{\phi^x}_{AC}\bra{\phi^x}$
induces a rank-1 measurement $\{M_B^x\}$ of system $B$.

In other words, there is a one-to-one correspondence between the set of all rank-1 measurements on subsystem $B$ and the set of all pure-state decompositions of $\rho_{AC}$. Thus, for a given three-party pure state $\ket{\psi}_{ABC}$, any optimization over all pure state decompositions of $\rho_{AC}$ is equivalent to optimizing over all possible rank-1 measurements on subsystem $B$.
\begin{proof}
To prove Eq.~(\ref{qCCEq}), let us consider a rank-1 measurement $\{M_B^x\}$ inducing an optimal pure-state decomposition of
$\rho_{AC}=\sum_{x}p_x \ket{\phi^x}_{AC}\bra{\phi^x}$
realizing $E_q(\rho_{AC})$, that is,
\begin{align}
E_q(\rho_{AC})=&\sum_{x}p_x^qE_q\left(\ket{\phi^x}_{AC}\right)
=\sum_{x}p_x^q S_q\left(\rho^x_A\right)
\label{opEq}
\end{align}
with
\begin{align}
\rho^x_A=\T_C\ket{\phi^x}_{AC}\bra{\phi^x},~~\rho_A=\sum_{x}p_x\rho^x_A.
\label{Rax}
\end{align}

Because
\begin{align}
\rho^x_A=&\frac{1}{p_x}\T_{BC}[(I_{A}\otimes M_B^x\otimes I_{C})\ket{\psi}_{ABC}\bra{\psi}]\nonumber\\
=&\frac{1}{p_x}\T_{B}[(I_{A}\otimes M_B^x)\rho_{AB}]
\label{trBCB}
\end{align}
with
\begin{align}
p_x=&\T[(I_{A}\otimes M_B^x\otimes I_{C})\ket{\psi}_{ABC}\bra{\psi}]\nonumber\\
=&\T[(I_{A}\otimes M_B^x)\rho_{AB}],
\label{trBCBpro}
\end{align}
each $\rho^x_A$ can be obtained from $\rho_{AB}$ by measuring subsystem $B$ with respect to $M^x_B$. Thus we have
\begin{align}
S_q(\rho_A)-E_q(\rho_{AC})=&S_q(\rho_A)-\sum_{x}p_x^q S_q\left(\rho^x_A \right)\nonumber\\
\leq& {\mathcal J}_q^{\leftarrow}(\rho_{AB})
\label{qCCbig}
\end{align}
where the inequality is due to the definition of $q$-CC.

Conversely, let us assume an optimal measurement $\{M_B^x\}$ realizing ${\mathcal J}_q^{\leftarrow}(\rho_{AB})$, that is,
\begin{align}
{\mathcal J}_q^{\leftarrow}(\rho_{AB})= S_q(\rho_A)-\sum_{x}p_x^q S_q\left(\rho^x_A \right)
\label{CCq2}
\end{align}
with
\begin{align}
p_x=\T[(I_{A}\otimes M_B^x)\rho_{AB}]
\label{pxforJ}
\end{align}
and
\begin{align}
\rho^x_A=\T_B[(I_{A}\otimes M_B^x)\rho_{AB}]/{p_x}.
\label{rhoxforJ}
\end{align}
Although each operator $M_B^x$ may not be of rank-1 in general, we can take a decomposition of $M_B^x$
\begin{align}
M_B^x=\sum_{y}M_B^{xy}
\label{r1decomp}
\end{align}
into rank-1 non-negative operators $M_B^{xy}$, so that $\{ M_B^{xy}\}$ becomes a new rank-1 measurement of subsystem $B$.
Let
\begin{align}
p_{xy}=\T[(I_{A}\otimes M_B^{xy})\rho_{AB}]
\label{pxy}
\end{align}
and
\begin{align}
\rho^{xy}_A=\T_B[(I_{A}\otimes M_B^{xy})\rho_{AB}]/{p_{xy}},
\label{rhoxy}
\end{align}
so that
\begin{align}
p_x=\sum_{y}p_{xy},~~\rho_A^x=\sum_{y}\frac{p_{xy}}{p_x}\rho^{xy}_A.
\label{pxpxy}
\end{align}

Now, we have
\begin{widetext}
\begin{align}
S_q(\rho_A)-\sum_{xy}p_{xy}^q S_q\left(\rho^{xy}_A\right)=&S_q(\rho_A)-\sum_{x}p_{x}^q\sum_{y}\left(\frac{p_{xy}}{p_x}\right)^q S_q\left(\rho^{xy}_A\right)\nonumber\\
\geq &S_q(\rho_A)-\sum_{x}p_{x}^q\sum_{y}\frac{p_{xy}}{p_x} S_q\left(\rho^{xy}_A\right)\nonumber\\
\geq&S_q(\rho_A)-\sum_{x}p_{x}^q S_q\left(\sum_{y} \frac{p_{xy}}{p_x} \rho^{xy}_A\right)\nonumber\\
%=& S_q(\rho_A)-\sum_{x}p_x^q S_q\left(\rho^x_A \right)\nonumber\\
=&{\mathcal J}_q^{\leftarrow}(\rho_{AB})
\label{qCCsmall}
\end{align}
\end{widetext}
where the first inequality is due to the convexity of the function $x^q$ for $q\geq1$, the second inequality is from the concavity of Tsallis-$q$ entropy, and the last equality is from Eqs.~(\ref{CCq2}) and (\ref{pxpxy}).
Moreover, from the definition of $E_q(\rho_{AC})$, we also have
\begin{align}
\sum_{xy}p_{xy}^q S_q\left(\rho^{xy}_A\right)\geq E_q(\rho_{AC}),
\label{qesmall}
\end{align}
which, together with Inequality~(\ref{qCCsmall}), leads us to
\begin{align}
S_q(\rho_A)-E_q(\rho_{AC})\geq {\mathcal J}_q^{\leftarrow}(\rho_{AB}).
\label{qCCsmall2}
\end{align}
Now, Inequalities~(\ref{qCCbig}) and (\ref{qCCsmall2}) recovers Eq.~(\ref{qCCEq}).

To prove Eq.~(\ref{qUEEqa}), we note that the one-to-one correspondence between the set of all rank-1 measurements of subsystem $B$ and the set of all pure-states decompositions of $\rho_{AC}$ mentioned in Eqs.~(\ref{ensemble}), (\ref{pxcor}) and (\ref{purecor}) enables us to rewrite
the definition of $q$-UE in Eq.~(\ref{qUE}) as
\begin{align}
{\mathbf u}E_q^{\leftarrow}(\rho_{AB})= S_q(\rho_A)- \max\sum_x p^q_x S_q(\rho^x_A)
\label{UEq2}
\end{align}
where the maximum is taken over all possible pure-state decompositions $\rho_{AC}=\sum_{x}p_x \ket{\phi^x}_{AC}\bra{\phi^x}$ such that
$\rho^x_A=\T_C \ket{\phi^x}_{AC}\bra{\phi^x}$. From the definition of $q$-EoA in Eq.~(\ref{qEoA}), we have
\begin{align}
{\mathbf u}E_q^{\leftarrow}(\rho_{AB})= S_q(\rho_A)- E_q^a\left(\rho_{AC}\right),
\label{UEq3}
\end{align}
which recovers Eq.~(\ref{qUEEqa}).
\end{proof}

For $q=1$, Eqs.~(\ref{qCCEq}) and (\ref{qUEEqa}) recover
the trade-off relations in~\cite{KW} and \cite{BGK}, respectively.
Thus Theorem~\ref{thm: qUEEqa} encapsulates those results as special cases.

We also note that Eq.~(\ref{UEq3}) provides us with an extrinsic definition for $q$-UE of a bipartite quantum state $\rho_{AB}$ in relation with
its three-party purification $\ket{\psi}_{ABC}$. We first note that $S_q\left( \rho_A\right)=E_{q}\left(\ket{\psi}_{A(BC)}\right)$ represents the amount of entanglement of the pure state $\ket{\psi}_{ABC}$ with respect to the bipartition between $A$ and $BC$ quantified by Tsallis-$q$ entropy.
$E_q^a\left(\rho_{AC}\right)$ is the $q$-EoA of $\rho_{AC}$ representing the maximum average entanglement(with respect to $q$-expectation)
that is possible to be concentrated on the subsystem $AC$ with the assistance of $B$.
Thus ${\mathbf u}E_q^{\leftarrow}(\rho_{AB})$ is the residual entanglement that cannot be localized
on $AC$ by the local measurement of $B$. Thus the term {\em unlocalizable} naturally arises.

\begin{Thm}
For $q \geq 1$ and a three-party pure state $\ket{\psi}_{ABC}$,
we have
\begin{align}
S_q(\rho_A)={\mathbf u}\delta_q^{\leftarrow}(\rho_{BA})+{\mathbf u}E_q^{\leftarrow}(\rho_{CA}).
\label{qUEqD}
\end{align}
\label{thm: qUEqD}
\end{Thm}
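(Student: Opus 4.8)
The plan is to reduce everything to the extrinsic characterization of $q$-UE established in the proof of Theorem~\ref{thm: qUEEqa}, namely Eq.~(\ref{UEq3}), and then exploit two symmetries of a tripartite pure state. First I would unfold the left-hand unlocalizable $q$-discord by its definition in Eq.~(\ref{qudis}) together with the Tsallis-$q$ mutual entropy in Eq.~(\ref{eq: qmutul}),
\begin{align}
{\mathbf u}\delta_q^{\leftarrow}(\rho_{BA})=S_q(\rho_B)+S_q(\rho_A)-S_q(\rho_{AB})-{\mathbf u}E_q^{\leftarrow}(\rho_{BA}),
\end{align}
so that the target identity becomes a statement purely about the two $q$-UE terms ${\mathbf u}E_q^{\leftarrow}(\rho_{BA})$ and ${\mathbf u}E_q^{\leftarrow}(\rho_{CA})$.

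The key step is to apply Eq.~(\ref{UEq3}) in two different relabelings of the pure state $\ket{\psi}_{ABC}$. Since that relation was derived solely from the one-to-one correspondence between rank-1 measurements on the measured party and pure-state decompositions of the complementary pair, it holds with any party playing the role of the retained system. Reading it with $B$ retained and $A$ measured gives ${\mathbf u}E_q^{\leftarrow}(\rho_{BA})=S_q(\rho_B)-E_q^a(\rho_{BC})$, and with $C$ retained and $A$ measured gives ${\mathbf u}E_q^{\leftarrow}(\rho_{CA})=S_q(\rho_C)-E_q^a(\rho_{CB})$. Subtracting, and using that $q$-EoA depends only on the underlying bipartite density matrix so that $E_q^a(\rho_{BC})=E_q^a(\rho_{CB})$, I would obtain ${\mathbf u}E_q^{\leftarrow}(\rho_{CA})-{\mathbf u}E_q^{\leftarrow}(\rho_{BA})=S_q(\rho_C)-S_q(\rho_B)$.

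Substituting this difference back collapses the right-hand side to $S_q(\rho_A)+S_q(\rho_C)-S_q(\rho_{AB})$, so it remains only to cancel $S_q(\rho_C)$ against $S_q(\rho_{AB})$. This is the one genuinely pure-state input: across the $AB|C$ bipartition the Schmidt decomposition forces $\rho_{AB}$ and $\rho_C$ to share the same nonzero spectrum, and since Tsallis-$q$ entropy is a spectral function we have $S_q(\rho_{AB})=S_q(\rho_C)$. With that the right-hand side equals $S_q(\rho_A)$, completing the proof.

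I do not expect a serious obstacle here; the content of the theorem is essentially a bookkeeping consequence of Theorem~\ref{thm: qUEEqa}. The only point requiring care is to track the left-arrow convention correctly---in both ${\mathbf u}\delta_q^{\leftarrow}(\rho_{BA})$ and ${\mathbf u}E_q^{\leftarrow}(\rho_{CA})$ it is system $A$ that is measured---and to make sure the symmetry $E_q^a(\rho_{BC})=E_q^a(\rho_{CB})$ and the spectral identity $S_q(\rho_{AB})=S_q(\rho_C)$ are invoked for the correct cuts.
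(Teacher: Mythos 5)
Your proof is correct and takes essentially the same route as the paper: both arguments unfold the unlocalizable $q$-discord via its definition and then apply the extrinsic characterization ${\mathbf u}E_q^{\leftarrow}(\rho_{XY})=S_q(\rho_X)-E_q^a(\rho_{XZ})$ of Eq.~(\ref{UEq3}) under relabelings of the parties, closing with the symmetry $E_q^a(\rho_{BC})=E_q^a(\rho_{CB})$ and the pure-state spectral identity $S_q(\rho_{AB})=S_q(\rho_C)$. The only cosmetic difference is that the paper packages the intermediate expression using the Tsallis-$q$ conditional entropy $S_q(\rho_{B|A})$ rather than subtracting the two $q$-UE terms directly.
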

\begin{proof}
From the definition of $q$-UD in Eq.~(\ref{qudis}) together with Eqs.~(\ref{eq: qmutul}) and (\ref{UEq3}), we have
\begin{align}
{\mathbf u}\delta_q^{\leftarrow}(\rho_{AB})
=&-S_q\left(\rho_{A|B}\right)+E_q^a\left(\rho_{AC}\right),
\label{qudtcon}
\end{align}
where
\begin{align}
S_q\left(\rho_{A|B}\right)=S_q\left(\rho_{AB}\right)-S_q\left(\rho_B\right)
\label{qcondent}
\end{align}
is the {\em Tsallis-$q$ conditional entropy} of $\rho_{AB}$.
Moreover, for a three-party pure state $\ket{\psi}_{ABC}$, Eqs.~(\ref{UEq3}) and (\ref{qudtcon}) are universal
with respect to subsystems, therefore
\begin{align}
{\mathbf u}E_q^{\leftarrow}(\rho_{CA})=&S_q\left(\rho_{C}\right)-E_q^a\left(\rho_{CB}\right),
\label{uniABC2}
\end{align}
and
\begin{align}
{\mathbf u}\delta_q^{\leftarrow}(\rho_{BA})=&-S_q\left(\rho_{B|A}\right)+E_q^a\left(\rho_{BC}\right).
\label{uniABC1}
\end{align}

From Eqs.~(\ref{uniABC2}) and (\ref{uniABC1}), we have
\begin{align}
{\mathbf u}\delta_q^{\leftarrow}(\rho_{BA})+ {\mathbf u}E_q^{\leftarrow}(\rho_{CA})=&-S_q\left(\rho_{B|A}\right)
+E_q^a\left(\rho_{BC}\right)\nonumber\\
&+S_q\left(\rho_{C}\right)-E_q^a\left(\rho_{CB}\right)\nonumber\\
=&S_q\left(\rho_A\right)
\label{qUEqD2}
\end{align}
where the second inequality is due to
\begin{align}
S_q\left(\rho_{C}\right)=S_q\left(\rho_{AB}\right)
\label{same12}
\end{align}
for a three-party pure state $\ket{\psi}_{ABC}$.
\end{proof}

For a three-party pure state $\ket{\psi}_{ABC}$, Theorem~\ref{thm: qUEqD} says that the total entanglement between $A$ and $BC$ quantified by Tsallis-$q$ entropy consists of quantum discord between $A$ and $B$ quantified by $q$-UD and the entanglement between $A$ and $C$ quantified by $q$-UE. Thus Theorem~\ref{thm: qUEqD} establishes a mutually exclusive nature between $q$-UE and $q$-UD distributed in three-party quantum systems.

\section{Equivalence in monogamy and polygamy inequalities of three-party $q$-expected correlations}
\label{sec: monopoly}
In this section, we show the equivalence of monogamy and polygamy inequalities for $q$-UE, $q$-EoA and $q$-UD distributed in three-party quantum systems. We first consider the relation between monogamy inequality of $q$-UE and polygamy inequality of $q$-EoA. The following theorem states that they are equivalent in three-party quantum systems.

\begin{Thm}
For $q\geq 1$, and
any three-party pure state $\ket{\psi}_{ABC}$,
monogamy inequality of $q$-UE is equivalent to polygamy inequality of $q$-EoA, that is,
\begin{align}
E^a_{q}\left(\ket{\psi}_{A(BC)}\right)
\leq& E^a_{q}\left(\rho_{AB}\right)+E^a_{q}\left(\rho_{AC}\right),
\label{Eqpoly3}
\end{align}
if and only if
\begin{align}
{\mathbf u}E_q^{\leftarrow}(\ket{\psi}_{A(BC)})\geq {\mathbf u}E_q^{\leftarrow}(\rho_{AB})+{\mathbf u}E_q^{\leftarrow}(\rho_{AC})
\label{UEmono3}
\end{align}
\label{thm: Eqsuff3}
\end{Thm}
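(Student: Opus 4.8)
The plan is to reduce both inequalities to statements involving only the three scalars $S_q(\rho_A)$, $E^a_q(\rho_{AB})$ and $E^a_q(\rho_{AC})$, and then to observe that after substitution they are one and the same inequality, with the sign of one being the exact rearrangement of the other. The bridge between the unlocalizable entanglements and the assisted entanglements is the trade-off identity of Theorem~\ref{thm: qUEEqa} applied to the given purification $\ket{\psi}_{ABC}$, so the whole argument is an algebraic equivalence rather than a pair of genuinely separate proofs.

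First I would record that, for a pure bipartite state, both collective quantities on the left-hand sides collapse to the Tsallis-$q$ entropy of the marginal. For $E^a_q(\ket{\psi}_{A(BC)})$ this is immediate: a pure state has a unique pure-state decomposition, so the maximization in Eq.~(\ref{qEoA}) is trivial and $E^a_q(\ket{\psi}_{A(BC)})=S_q(\rho_A)$. For ${\mathbf u}E_q^{\leftarrow}(\ket{\psi}_{A(BC)})$ I would argue directly from Eq.~(\ref{qUE}): any rank-$1$ measurement $\{M^x_{BC}\}$ on the joint system $BC$ steers $A$ into a conditional state $\rho^x_A=\T_{BC}[(I_A\otimes M^x_{BC})\ket{\psi}_{ABC}\bra{\psi}]/p_x$ that is necessarily pure, whence $S_q(\rho^x_A)=0$ for every outcome and the Tsallis-$q$ difference $\chi_q(\mathcal E)=S_q(\rho_A)$ is constant over all admissible measurements. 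Minimizing therefore also yields ${\mathbf u}E_q^{\leftarrow}(\ket{\psi}_{A(BC)})=S_q(\rho_A)$. Consequently both Inequality~(\ref{Eqpoly3}) and Inequality~(\ref{UEmono3}) share the common left-hand side $S_q(\rho_A)$.

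Next I would invoke Theorem~\ref{thm: qUEEqa} in the extrinsic form of Eq.~(\ref{UEq3}), namely ${\mathbf u}E_q^{\leftarrow}(\rho_{AB})=S_q(\rho_A)-E^a_q(\rho_{AC})$, together with its universality under relabeling of the purifying subsystems (the same universality already used in the proof of Theorem~\ref{thm: qUEqD}), which likewise gives ${\mathbf u}E_q^{\leftarrow}(\rho_{AC})=S_q(\rho_A)-E^a_q(\rho_{AB})$. Adding the two identities produces
\begin{align}
{\mathbf u}E_q^{\leftarrow}(\rho_{AB})+{\mathbf u}E_q^{\leftarrow}(\rho_{AC})=2S_q(\rho_A)-\left[E^a_q(\rho_{AB})+E^a_q(\rho_{AC})\right].
\end{align}
Substituting this into the monogamy Inequality~(\ref{UEmono3}), which now reads $S_q(\rho_A)\geq 2S_q(\rho_A)-\left[E^a_q(\rho_{AB})+E^a_q(\rho_{AC})\right]$, and cancelling one copy of $S_q(\rho_A)$ from each side, leaves precisely $E^a_q(\rho_{AB})+E^a_q(\rho_{AC})\geq S_q(\rho_A)$, which is the polygamy Inequality~(\ref{Eqpoly3}). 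Since every step is either an identity or a reversible rearrangement, the chain runs in both directions and the equivalence follows.

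The computation is short, so the only points requiring care are the two pure-state reductions in the second step — especially the observation that rank-$1$ measurements on $BC$ purify the conditional marginals of $A$, forcing ${\mathbf u}E_q^{\leftarrow}(\ket{\psi}_{A(BC)})$ to take the constant value $S_q(\rho_A)$ — and the correct bookkeeping of the two applications of the trade-off identity so that the cross terms $E^a_q(\rho_{AB})$ and $E^a_q(\rho_{AC})$ enter symmetrically. Once one notes that the sum of the two unlocalizable entanglements equals $2S_q(\rho_A)$ minus the sum of the two assisted entanglements, the flip from $\geq$ to $\leq$ is automatic, and no further estimate is needed.
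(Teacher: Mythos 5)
Your proposal is correct and follows essentially the same route as the paper: both reduce $E^a_q(\ket{\psi}_{A(BC)})$ and ${\mathbf u}E_q^{\leftarrow}(\ket{\psi}_{A(BC)})$ to $S_q(\rho_A)$ via the pure-state argument, then apply the trade-off identity of Theorem~\ref{thm: qUEEqa} under the two relabelings to convert one inequality into the other. The only difference is presentational — the paper packages the same algebra as a single identity equating the two defects, while you substitute and rearrange — so there is nothing substantive to add.
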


\begin{proof}
From the definition of $q$-UE in Eq.~(\ref{qUE}), we have
\begin{align}
{\mathbf u}E_q^{\leftarrow}\left(\ket{\psi}_{A(BC)}\right)
=&\min_{\mathcal E} \chi_q\left(\mathcal E\right)\nonumber\\
=&\min \left[S_q (\rho_A)-\sum_x p_x^q S_q(\rho_A^x)\right]
\label{qUEeq}
\end{align}
where the minimization is over all possible ensembles $\mathcal E = \{p_x, \rho^x_A\}$ of $\rho_A$ induced by
rank-1 measurements on the composite subsystem $BC$.
Becaues $\ket{\psi}_{ABC}$ is a pure state, each rank-1 measurement of $BC$
induces a pure-state ensemble of $\rho_A$. Thus each $\rho^x_A$ in Eq.~(\ref{qUEeq}) is a pure state. So $S_q(\rho_A^x)=0$ for each $x$, and we have
\begin{align}
{\mathbf u}E_q^{\leftarrow}\left(\ket{\psi}_{A(BC)}\right)=S_q(\rho_A).
\label{TsUeeq}
\end{align}
Moreover, from the definition of $q$-EoA in Eq.~(\ref{qEoA}), we have
\begin{align}
E_q^a\left(\ket{\psi}_{A(BC)}\right)=S_q(\rho_A)
\label{EoAeq}
\end{align}
for any pure state $\ket{\psi}_{ABC}$.

Now we note that Eq.~(\ref{qUEEqa}) of Theorem~\ref{thm: qUEEqa}
is universal, that is,
\begin{align}
E^a_q\left(\rho_{AB}\right)=S_q(\rho_A)-{\mathbf u}E_q^{\leftarrow}(\rho_{AC})
\label{qUEEqa23a}
\end{align}
and
\begin{align}
E^a_q\left(\rho_{AC}\right)=S_q(\rho_A)-{\mathbf u}E_q^{\leftarrow}(\rho_{AB})
\label{qUEEqa23b}
\end{align}
for a three-party pure state $\ket{\psi}_{ABC}$. Thus Eqs.~(\ref{TsUeeq}), (\ref{EoAeq}),
(\ref{qUEEqa23a}) and (\ref{qUEEqa23b}) lead us to
\begin{align}
E^a_q&\left(\rho_{AB}\right)+E^a_q\left(\rho_{AC}\right)-E_q^a\left(\ket{\psi}_{A(BC)}\right)\nonumber\\
&={\mathbf u}E_q^{\leftarrow}\left(\ket{\psi}_{A(BC)}\right)-({\mathbf u}E_q^{\leftarrow}(\rho_{AB})+{\mathbf u}E_q^{\leftarrow}(\rho_{AC})),
\label{qEqUEineq1}
\end{align}
which completes the proof.
\end{proof}

For $q=1$, Inequality~(\ref{Eqpoly3}) is reduced to the polygamy inequality of EoA in three-party quantum systems,
which was shown to be true~\cite{BGK}. Thus Theorem~\ref{thm: Eqsuff3} provides us with a proof of Inequality (\ref{UEmono3}) for $q=1$, that is, the monogamy inequality of three-party entanglement in terms of UE. Moreover, Theorem~\ref{thm: Eqsuff3} also states that this equivalence is still valid for the $q$-expected correlations in the full range of $q \geq 1$.

Now, let us consider another equivalence of polygamy conditions in three-party quantum systems.
The following theorem shows that the polygamy inequality of $q$-EoA in (\ref{Eqpoly3}) even implies a polygamous
property of quantum discord, a different kind of quantum correlation.
\begin{Thm}
For $q\geq 1$, and any three-party pure state $\ket{\psi}_{ABC}$,
the polygamy inequality of $q$-EoA is equivalent to the polygamy inequality of $q$-UD, that is, Inequality~(\ref{Eqpoly3}) is true if and only if
\begin{align}
{\mathbf u}\delta_q^{\leftarrow}\left(\ket{\psi}_{A(BC)}\right)
\leq& {\mathbf u}\delta_q^{\leftarrow}(\rho_{AB})+{\mathbf u}\delta_q^{\leftarrow}(\rho_{AC}),
\label{Dqpoly3}
\end{align}
where ${\mathbf u}\delta_q^{\leftarrow}\left(\ket{\psi}_{A(BC)}\right)$ is the $q$-UD of the pure state $\ket{\psi}_{ABC}$ with respect to the bipartition between $A$ and $BC$
\label{thm: Dqpoly3}
\end{Thm}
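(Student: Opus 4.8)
The plan is to show that the two polygamy inequalities do not merely imply one another but in fact reduce to one and the same inequality once every term is evaluated on a three-party pure state, exactly in the spirit of the reduction carried out in Theorem~\ref{thm: Eqsuff3}. Concretely, I would compute the ``polygamy gap'' of each side and verify that both equal the common quantity $E_q^a(\rho_{AB})+E_q^a(\rho_{AC})-S_q(\rho_A)$; the equivalence is then immediate, since one gap is nonnegative if and only if the other is.

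First I would put the $q$-EoA inequality (\ref{Eqpoly3}) into closed form. By Eq.~(\ref{EoAeq}) its left-hand side is $E_q^a(\ket{\psi}_{A(BC)})=S_q(\rho_A)$, so (\ref{Eqpoly3}) is equivalent to $S_q(\rho_A)\leq E_q^a(\rho_{AB})+E_q^a(\rho_{AC})$. Next I would evaluate the pure-state $q$-UD appearing on the left of (\ref{Dqpoly3}). Applying the definition (\ref{qudis}) to the bipartition $A|(BC)$ with the Tsallis-$q$ mutual entropy (\ref{eq: qmutul}), and using purity of $\ket{\psi}_{ABC}$ (so that $S_q(\rho_{ABC})=0$ and $S_q(\rho_{BC})=S_q(\rho_A)$) together with Eq.~(\ref{TsUeeq}), I expect to obtain ${\mathbf u}\delta_q^{\leftarrow}(\ket{\psi}_{A(BC)})=S_q(\rho_A)$.

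Then I would rewrite the two mixed-state $q$-UD terms on the right of (\ref{Dqpoly3}) through the universal (subsystem-relabelled) form of Eq.~(\ref{qudtcon}), giving ${\mathbf u}\delta_q^{\leftarrow}(\rho_{AB})=-S_q(\rho_{A|B})+E_q^a(\rho_{AC})$ and ${\mathbf u}\delta_q^{\leftarrow}(\rho_{AC})=-S_q(\rho_{A|C})+E_q^a(\rho_{AB})$. The decisive step is that the Tsallis-$q$ conditional entropies cancel: since $S_q(\rho_{AB})=S_q(\rho_C)$ and $S_q(\rho_{AC})=S_q(\rho_B)$ for a three-party pure state, Eq.~(\ref{qcondent}) yields
\begin{align}
S_q(\rho_{A|B})+S_q(\rho_{A|C})=\left[S_q(\rho_C)-S_q(\rho_B)\right]+\left[S_q(\rho_B)-S_q(\rho_C)\right]=0. \nonumber
\end{align}
Hence the right-hand side of (\ref{Dqpoly3}) collapses to $E_q^a(\rho_{AB})+E_q^a(\rho_{AC})$ and, together with the pure-state value ${\mathbf u}\delta_q^{\leftarrow}(\ket{\psi}_{A(BC)})=S_q(\rho_A)$, Inequality~(\ref{Dqpoly3}) becomes precisely $S_q(\rho_A)\leq E_q^a(\rho_{AB})+E_q^a(\rho_{AC})$, i.e.\ the reduced form of (\ref{Eqpoly3}).

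I expect the main obstacle to be purely organizational rather than analytical: one must carefully invoke the \emph{universal} version of (\ref{qudtcon}), pairing each $q$-EoA term with the conditional entropy of the complementary bipartition, and then apply the correct pure-state identities $S_q(\rho_{AB})=S_q(\rho_C)$ and $S_q(\rho_{AC})=S_q(\rho_B)$ to force the cancellation. No concavity or convexity estimate is needed here, in contrast to Theorem~\ref{thm: qUEEqa}; once the conditional-entropy terms are seen to annihilate, both polygamy gaps are literally the same expression and the equivalence follows by direct substitution.
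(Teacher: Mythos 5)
Your proposal is correct and follows essentially the same route as the paper: both reduce the right-hand side of (\ref{Dqpoly3}) to $E_q^a(\rho_{AB})+E_q^a(\rho_{AC})$ via the universal (relabelled) form of Eq.~(\ref{qudtcon}) together with the cancellation $S_q(\rho_{A|B})+S_q(\rho_{A|C})=0$, and both identify the left-hand side with $S_q(\rho_A)=E_q^a(\ket{\psi}_{A(BC)})$. The only (harmless) deviation is that you obtain ${\mathbf u}\delta_q^{\leftarrow}(\ket{\psi}_{A(BC)})=S_q(\rho_A)$ directly from the definition (\ref{qudis}), purity, and Eq.~(\ref{TsUeeq}), whereas the paper derives the bipartite pure-state identity by appending a product environment $\ket{\phi}_C$ and invoking the universality of Theorem~\ref{thm: qUEqD}; your shortcut is simpler and equally valid.
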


\begin{proof}
For a bipartite pure state $\ket{\psi}_{AB}$ together with its environmental system $\ket{\phi}_C$,
the universality of Inequality~(\ref{qUEqD}) for $\ket{\psi}_{AB}\otimes \ket{\phi}_C$ implies
\begin{align}
S_q(\rho_B)=&{\mathbf u}\delta_q^{\leftarrow}(\rho_{AB})+{\mathbf u}E_q^{\leftarrow}(\rho_{CB})\nonumber\\
=&{\mathbf u}\delta_q^{\leftarrow}(\ket{\psi}_{AB})+{\mathbf u}E_q^{\leftarrow}\left(\ket{\phi}_C\bra{\phi}\otimes \rho_B\right),
\label{qUEqD2}
\end{align}
where $\rho_B=\T_A \ket{\psi}_{AB}\bra{\psi}$.

Because the subsystems $BC$ is in a product state, any measurement on subsystem $B$ leaves the subsystem $C$ intact, that is, $\ket{\phi}_C\bra{\phi}$. Thus, the definition of $q$-UE leads us to
\begin{align}
{\mathbf u}E_q^{\leftarrow}\left(\ket{\phi}_C\bra{\phi}\otimes \rho_B\right)
=&\min \left[S_q (\rho_C)-\sum_x p_x^q S_q(\rho_C^x)\right]\nonumber\\
=&\min \left[S_q (\ket{\phi}_C)-\sum_x p_x^q S_q(\ket{\phi}_C)\right]\nonumber\\
=&0
\label{qud0}
\end{align}
where the minimization is over all possible rank-1 measurement on subsystem $B$.
From Eqs.~(\ref{qUEqD2}) and (\ref{qud0}), we have
\begin{align}
{\mathbf u}\delta_q^{\leftarrow}(\ket{\psi}_{AB})=S_q(\rho_B),
\label{qUEqD3}
\end{align}
therefore
\begin{align}
{\mathbf u}\delta_q^{\leftarrow}(\ket{\psi}_{AB})=S_q(\rho_B)=S_q(\rho_A)={\mathbf u}\delta_q^{\leftarrow}(\ket{\psi}_{BA}),
\label{qUEqD3sam}
\end{align}
for any bipartite pure state $\ket{\psi}_{AB}$.

For a three-party pure state $\ket{\psi}_{ABC}$, the universality of Eq.~(\ref{qudtcon}) implies
\begin{align}
{\mathbf u}\delta_q^{\leftarrow}(\rho_{AB})+S_q\left(\rho_{A|B}\right)=&E_q^a\left(\rho_{AC}\right),\nonumber\\
{\mathbf u}\delta_q^{\leftarrow}(\rho_{AC})+S_q\left(\rho_{A|C}\right)=&E_q^a\left(\rho_{AB}\right).
\label{quDqEoA}
\end{align}
Because $\ket{\psi}_{ABC}$ is a pure state, we haves
\begin{align}
S_q(\rho_{AB})=S_q(\rho_C),~ S_q(\rho_{AC})=S_q(\rho_B),
\label{qentsame}
\end{align}
which lead to
\begin{align}
S_q\left(\rho_{A|B}\right)+S_q\left(\rho_{A|C}\right)=&S_q\left(\rho_{AB}\right)-S_q(\rho_B)\nonumber\\
&+S_q\left(\rho_{AC}\right)-S_q(\rho_C)\nonumber\\
=&0.
\label{conzero}
\end{align}
From Eqs.~(\ref{quDqEoA}) and (\ref{conzero}), we have
\begin{align}
{\mathbf u}\delta_q^{\leftarrow}(\rho_{AB})+{\mathbf u}\delta_q^{\leftarrow}(\rho_{AC})=&E_q^a\left(\rho_{AC}\right)+
E_q^a\left(\rho_{AB}\right).
\label{quDqEoAsam}
\end{align}

By considering $\ket{\psi}_{ABC}$ as a bipartite pure state with respect to the bipartition between $A$ and $BC$,
Eqs.~(\ref{EoAeq}) and (\ref{qUEqD3sam}) lead us to
\begin{align}
{\mathbf u}\delta_q^{\leftarrow}\left(\ket{\psi}_{A(BC)}\right)=E_q^a\left(\ket{\psi}_{A(BC)}\right).
\label{qudabc}
\end{align}
Now, Eqs.~(\ref{quDqEoAsam}) and (\ref{qudabc}) complete the proof.
\end{proof}

Because Inequality~(\ref{Eqpoly3}) is generally true for $q=1$~\cite{BGK},
Theorem~\ref{thm: Dqpoly3} proves Inequality (\ref{Dqpoly3}) for $q=1$, that is, the polygamy inequality of UD in three-party systems. Theorem~\ref{thm: Dqpoly3} also provides us with their equivalence for the full range of $q \geq 1$.

%%%%%%%%%%%%%%%%%%%%%%%%%%%%%%%%%%%%%%%%%%%%%%%%%%%%%%%%%%%%%%%%%%%%%%
%%%                                                                %%%
%%%                           Conclusion                           %%%
%%%                                                                %%%
%%%%%%%%%%%%%%%%%%%%%%%%%%%%%%%%%%%%%%%%%%%%%%%%%%%%%%%%%%%%%%%%%%%%%%

\section{Conclusion}
\label{Conclusion}

In this paper, we have shown that the mutually exclusive natures among classical and quantum correlations in multi-party quantum systems can be characterized in terms of $q$-expectation for the full range of $q\geq 1$. Using Tsallis-$q$ entropy and $q$-expectation, we have provided generalized definitions of classical and quantum correlations such that $q$-CC, $q$-E and $q$-D as well as their dual quantities $q$-UE $q$-EoA and $q$-UD. By investigating their properties, we have established some classes of trade-off relations in three-party quantum systems of arbitrary dimension with respect to $q$-expectation. We have also shown the equivalences between the monogamy inequality of $q$-UE and polygamy inequalities of $q$-EoA and $q$-UD distributed in three-party quantum systems with respect to $q$-expectation. As Tsallis-$q$ entropy and $q$-expectation are reduced to von Neumann entropy and ordinary expectation for $q=1$, our results encapsulate previous results of restricted shareability and trade-off relations of correlations as special cases.

The study of limited shareability and distribution of quantum correlations in multi-party
quantum systems is the key ingredient of many secure quantum communication protocols.
For example, the quantitative characterization of trade-off relations among classical and quantum correlations enables us to possibly quantify how much information an eavesdropper could potentially obtain about the secret key to be extracted in quantum cryptography. We also note that the study of higher-dimensional quantum system than not just qubits is preferred in many quantum information tasks; in quantum key distribution, the use of quantum states in higher dimensional systems increases coding density and provide stronger security compared to qubits.

Thus our results of trade-off relations among classical and quantum correlations in high dimensional quantum systems can be useful methods for the foundation of many secure quantum information and communication protocols. Noting the importance of the study on multi-party quantum correlations, our results here can also provide a rich reference for future work to understand the nature of multi-party quantum correlations.

\section*{Acknowledgments}
This research was supported by Basic Science Research Program through the National Research Foundation of Korea(NRF) funded by the Ministry of Education(NRF-2017R1D1A1B03034727).

%%%%%%%%%%%%%%%%%%%%%%%%%%%%%%%%%%%%%%%%%%%%%%%%%%%%%%%%%%%%%%%%%%%%%%%%

\end{document}